\documentclass[11pt]{article}

\usepackage{amssymb}
\usepackage{amsmath}
\usepackage{graphicx}
\usepackage{color}
\usepackage{fullpage}

\usepackage{algorithmic}
\usepackage[linesnumbered,ruled]{algorithm2e}


\begin{document}

\newtheorem{theorem}{Theorem}[section]
\newtheorem{lemma}{Lemma}[section]
\newtheorem{corollary}{Corollary}[section]
\newtheorem{claim}{Claim}[section]
\newtheorem{proposition}{Proposition}[section]
\newtheorem{definition}{Definition}[section]
\newtheorem{fact}{Fact}[section]
\newtheorem{example}{Example}[section]

\newcommand{\quod}{\hfill $\blacksquare$ \bigbreak}
\newcommand{\reals}{I\!\!R}
\newcommand{\property}{I\!\!P}
\newcommand{\np}{\mbox{{\sc NP}}}
\newcommand{\sing}{\mbox{{\sc Sing}}}
\newcommand{\con}{\mbox{{\sc Con}}}
\newcommand{\prob}{\mbox{Prob}}
\newcommand{\atm}{\mbox{{\sc ATM}}}
\newcommand{\hopn}{\hop_{\cN}}
\newcommand{\atmn}{\atm_{\cN}}
\newcommand{\cA}{{\cal A}}
\newcommand{\cO}{{\cal O}}
\newcommand{\cP}{{\cal P}}
\newcommand{\cC}{{\cal C}}
\newcommand{\C}{{\cal C}}
\newcommand{\cB}{{\cal B}}
\newcommand{\cG}{{\cal G}}
\newcommand{\cN}{{\cal N}}
\newcommand{\cU}{{\cal U}}
\newcommand{\cF}{{\cal F}}
\newcommand{\cT}{{\cal T}}
\newcommand{\hx}{\hat{x}}
\newcommand{\cS}{{\cal S}}

\newcommand{\EE}{$\epsilon$-\-en\-ve\-lope}
\newcommand{\EQ}{$\epsilon$-\-con\-quer}
\newcommand{\ED}{{\tt En\-ve\-lo\-pe-Dis\-co\-ve\-ry}}
\newcommand{\SC}{{\tt Safe\--Con\-quer}}
\newcommand{\CE}{{\tt Con\-fir\-med-\-E\-cho}}
\newcommand{\CT}{{\tt Co\-lor\&\-Tran\-smit}}
\newcommand{\AC}{{\tt As\-sign-\-Co\-lor}}
\newcommand{\CD}{{\tt Con\-fli\-ct-\-De\-te\-ction}}
\newcommand{\CR}{{\tt Con\-fli\-ct-\-Re\-so\-lu\-tion}}
\newcommand{\SGR}{{\tt Shifted Grid-Refinement}}
\newcommand{\ERRD}{{\tt Error-Detection}}
\newcommand{\WGR}{{\tt Witnessed Grid-Refinement}}

\newcommand{\DB}{$\Delta$-block}
\newcommand{\DBs}{$\Delta$-blocks}
\newcommand{\FDB}{$5\Delta$-block}
\newcommand{\FDBs}{$5\Delta$-blocks}

\newcommand{\UB}{{\tt Universal Broadcast}}
\newcommand{\CAB}{{\tt Company-Aware Broadcast}}
\newcommand{\DI}{{\tt Dense-1}}
\newcommand{\DII}{{\tt Dense-2}}

\newcommand{\MS}{\mathcal{S}}

\newcommand{\eps}{{\epsilon}}
\newcommand{\la}{{\lambda}}
\newcommand{\al}{{\alpha}}
\newcommand{\qed}{\hfill $\square$ \smallbreak}

\newcommand{\UDGI}{{\tt UDG1}}
\newcommand{\UDGII}{{\tt UDG2}}
\newcommand{\SYM}{{\tt SYM}}

\newenvironment{proof}{\noindent{\bf Proof:}}{\qed}

\def\lalto{\left \lceil}
\def\ralto{\right \rceil}
\def\lbasso{\left \lfloor}
\def\rbasso{\right \rfloor}
\def\D{{\Delta}}
\def\qed{\hfill$\Box$}

\baselineskip    0.19in
\parskip         0.1in
\parindent       0.0in

\bibliographystyle{plain}
\title{{\bf Advice Complexity of Treasure Hunt\\ in Geometric Terrains }}
\author{
Andrzej Pelc \footnotemark[1] \footnotemark[2]
\and
Ram Narayan Yadav \footnotemark[3]
}
\date{ }
\maketitle
\def\thefootnote{\fnsymbol{footnote}}

\footnotetext[1]{ \noindent
D\'epartement d'informatique, Universit\'e du Qu\'ebec en Outaouais, Gatineau,
Qu\'ebec J8X 3X7, Canada.
E-mail: {\tt pelc@uqo.ca} 
}
\footnotetext[2]{ \noindent  
Partially supported by NSERC discovery grant 2018-03899 and     
by the Research Chair in Distributed Computing at the
Universit\'e du Qu\'{e}bec en Outaouais. 
}
\footnotetext[3]{ \noindent
Department of Computer Science and Engineering,
Indian Institute of Information Technology, Dharwad, India.
E-mail : {\tt narayanram1988@gmail.com}  
}

\begin{abstract}
Treasure hunt is the task of finding an inert target by a mobile agent in an unknown environment. We consider treasure hunt in geometric terrains with obstacles.
Both the terrain and the obstacles are modeled as polygons and both the agent and the treasure are modeled as points. 
The agent navigates in the terrain, avoiding obstacles, and finds the treasure when there is a segment
of length at most 1 between them, unobstructed by the  boundary of the terrain or by the obstacles. The cost of finding the treasure is the length of the trajectory of the agent. 
We investigate the amount of information that the agent needs {\em a priori} in order to find the treasure at cost 
$O(L)$, where $L$ is the length of a shortest path in the terrain from the initial position of the agent to the treasure, avoiding obstacles.
Following the well-established paradigm of {\em algorithms with advice}, this information is given to the agent in advance as a binary string, by an oracle cooperating with the agent and knowing the whole environment:
in our case, the terrain,
the position of the treasure and the initial position of the agent. Advice complexity of treasure hunt is the minimum length of the advice string (up to multiplicative constants)
that enables the agent to find the treasure at cost $O(L)$.

We first consider treasure hunt in {\em regular} terrains which are defined as convex polygons with convex $c$-fat obstacles, for some constant $c>1$.
A polygon is $c$-fat if the ratio of the radius of the smallest disc containing it to the radius of the largest disc contained in it is at most $c$. For the class of regular terrains,
we establish the exact advice complexity of treasure hunt. We then show that advice complexity of treasure hunt for the class of arbitrary terrains (even for non-convex polygons without obstacles,
and even for those with only horizontal or vertical sides) is exponentially larger than for regular terrains.

\vspace*{1cm}

\noindent {\bf keywords:} mobile robot, treasure hunt, polygon, obstacle.

\vspace*{2cm}

\end{abstract}

\thispagestyle{empty}

\pagebreak

\section{Introduction}

\subsection{The background and the problem}
Treasure hunt is the task of finding an inert target by a mobile agent in an unknown environment. We consider treasure hunt in geometric terrains with obstacles.
This task has important applications when the terrain is dangerous or difficult to access for humans. Rescuing operations in mines contaminated or submerged by water are an example of situations
where a lost miner is a target that has to be found fast by a mobile robot, and hence the length of the robot's trajectory should be as short as possible.

We model the treasure hunt problem as follows. The terrain is represented by an arbitrary polygon $\cP _0$ with
pairwise disjoint polygonal obstacles
$\cP _1,..., \cP _k$, included in the interior of $\cP _0$, i.e., the terrain is $\cT=\cP _0 \setminus (\cP _1\cup \cdots \cup \cP _k)$.
We assume that the polygon  $\cP _0$ is closed (i.e., contains its boundary) and the polygons $\cP _1,..., \cP _k$ are open (i.e., do not contain their boundaries).
In this way the terrain  $\cT$ is a closed subset of the plane, i.e., it contains its entire boundary, which is the union of boundaries of all polygons.
It should be noted that the restriction to polygons is only to simplify
the description, and all our results can be generalized to the case where polygons 
are replaced by compact subsets of the plane homeotopic with a disc (i.e., without holes) 
and regular enough to have well-defined boundary length. 
The treasure is modeled as an inert interior point of the terrain. 

The mobile agent (robot) is modeled as a point starting inside the terrain 
and moving along a polygonal line inside it. It is equipped with a compass and a unit of length, and we assume that it has unbounded memory: from the computational point of view the agent is a Turing machine. 
The moves of the agent are of two types: {\em free} moves and {\em boundary} moves.
A free move is a move of the agent in the terrain along a segment of a chosen length in a chosen direction. Such a move may be interrupted if the agent hits the boundary of the terrain during its execution, 
and the agent becomes aware of this interruption. A boundary move is executed by an agent located on the boundary of  the terrain. Such a move is of the form: follow the boundary that you are on (this can be the boundary of any of the polygons ${\cal P}_i$)
in a chosen direction (there are two possible directions), either  at a chosen distance or until getting to a point with a given property.

The aim of treasure hunt is for the agent to {\em see} the treasure. We assume that the agent currently located at a point $p$ of the terrain sees all  points $q$  for which the segment $pq$
is entirely contained in $\cT$ and is of length at most 1. The cost of a treasure hunt algorithm on an instance is the length of the trajectory of the agent from its initial position until it sees the treasure.
We assume that the agent does not know the terrain nor the location of the treasure before starting treasure hunt.

We investigate the amount of information that the agent needs {\em a priori} in order to find the treasure at cost 
$O(L)$, where $L$ is the length of a shortest path in the terrain from the initial position of the agent to the treasure.
\footnote{Since we use the $O$-notation for functions with positive real  values that can be smaller than 1, it is important to give a precise definition that we will use: For two functions $f,g: \mathcal{R}^+ \longrightarrow \mathcal{R}^+$, we say that $f(x)$ is in $O(g(x))$ if there exists a positive constant $c$ such that $f(x) \leq cg(x)$, for all $x \in \mathcal{R}^+$.}
(Since the agent sees at distance 1, we have  $C\leq L \leq C+1$, where $C$ is the optimal cost of treasure hunt with full knowledge).
Following the well-established paradigm of {\em algorithms with advice} (see the subsection ``Related work''), this information is given to the agent in advance as a binary string, by an oracle
cooperating with the agent and knowing the whole environment:
in our case, the terrain,
the position of the treasure and the initial position of the agent. Advice complexity of treasure hunt is the minimum length of the advice string (up to multiplicative constants)
that enables the agent to find the treasure at cost $O(L)$. Advice complexity of a task can be considered to be a measure of its difficulty. 
Hence  our aim is to estimate the difficulty of treasure hunt in geometric terrains. It is well known that many algorithmic tasks become feasible or easier,
when the algorithm is supplied with a particular item of information, such as the size or diameter of the graph. However, the paradigm of algorithms with advice permits us to establish the minimum size of the information needed, regardless of its nature. Hence the measure of advice complexity is a quantitative approach to the knowledge provided to the algorithm, as opposed to the qualitative approach, studying the impact of knowing particular items of information, such as various numerical parameters of the problem. To the best of our knowledge, this is the first time that the advice complexity approach is applied to the problem of treasure hunt in the geometric setting.

Coming back to our application in the context of a miner lost in the mine, advice complexity of treasure hunt may be crucial. The miner knows the terrain, knows his/her position and knows the entrance to the mine. How to text as little information as possible to the rescuing team (time is precious) to allow a robot to reach the miner fast?

In order to formulate our results, we define the following parameter $\lambda$ called the {\em accessibility} of the treasure. $\lambda=\min(1/2,\rho)$, where $\rho$ is the largest radius, such that some disc with radius $\rho$ contains the treasure and is contained in the terrain. Since the treasure is located in an interior point of the terrain, we have $\lambda>0$. By definition, any disc of radius $\lambda$ containing the treasure and contained in the terrain has the property that the agent reaching any point of this disc can see the treasure.

%
\subsection{Our results}
We first consider treasure hunt in {\em regular} terrains which are defined as convex polygons with convex $c$-fat obstacles, for some constant $c>1$.
A polygon is $c$-fat if the ratio of the radius of the smallest disc containing it to the radius of the largest disc contained in it is at most $c$.
(For example, all regular convex polygons are 2-fat). For the class of regular terrains,
we establish the exact advice complexity of treasure hunt. 
For $L>\lambda$, we provide a treasure hunt algorithm working at cost $O(L)$ for all regular terrains, using advice of size $O( \log(L/\lambda))$, and we construct a class of regular terrains for which there is no treasure hunt algorithm working at cost $O(L)$ with advice of size $o( \log(L/\lambda))$. For $L\leq \lambda$, we construct a treasure hunt algorithm working at cost $O(L)$ for all regular terrains, without any advice.

In order to appreciate the strength of the tightness result for $L>\lambda$, notice that its positive part gives a concrete advice of size $O( \log(L/\lambda))$
(in our case indicating the approximate direction towards the treasure with respect to the initial position of the agent), and a treasure hunt algorithm of cost $O(L)$
for the class of regular terrains,  using this advice, while the negative part shows that 
no advice of size of smaller order, {\em regardless of its kind and meaning}, can help to accomplish treasure hunt at cost $O(L)$ in all regular terrains.

We then show that advice complexity of treasure hunt for the class of arbitrary terrains (even for non-convex polygons without obstacles,
and even for those with only horizontal or vertical sides) is exponentially larger than for regular terrains. Our negative result is even stronger: we construct terrains with treasure accessibilty 1/2 
for which advice complexity of treasure hunt can be a function of $L$ growing arbitrarily fast.

\subsection{Related work}

{\bf Treasure hunt.}
The problem of searching for a target by one or more mobile agents was investigated under many different scenarios.
The environment where the target is hidden may be a graph or a plane, and the search may be deterministic or randomized.
The book \cite{AG} surveys both the search for a fixed target and the related rendezvous problem, where the target and the searching agent are both mobile and
their role is symmetric: they cooperate to meet. This book is concerned mostly with randomized search strategies. In \cite{MP,TSZ} the authors studied relations between the problems of treasure hunt (searching for a fixed target) and rendezvous in graphs.  The authors of \cite{BCR} studied the task of finding a fixed point on the line and in the grid, and initiated the study of the task
of searching for an unknown line in the plane. This research was continued, e.g., in \cite{JL,La2}. In \cite{SF} the authors concentrated on game-theoretic aspects of
the situation where multiple selfish pursuers compete to find a target, e.g., in a ring. The main result of \cite{La} is an optimal algorithm to sweep a plane in order to locate an unknown fixed target, where locating means to get the agent originating at point $O$ to a point $P$ such that the target is in the segment $OP$. In \cite{FHGTM} the authors considered the generalization of the search problem in the plane to the case of several searchers.  Efficient search for a fixed or a moving target in the plane, under complete ignorance of the searching agent, was studied in \cite{Pe}.

{\bf Exploration of terrains.}
Exploration of unknown terrains by mobile robots is a subject closely related to treasure hunt.
A mobile agent has to see all points of the terrain, where seeing a point $p$ means either the 
existence of a segment between the current position of the agent and $p$ inside the terrain
(unlimited vision), or the existence of such a segment of length at most 1 (limited vision).
Most of the research in
this domain concerns the competitive framework, where the trajectory of the robot not knowing
the environment is compared to that of the optimal exploration algorithm having full knowledge.  

In~\cite{DKP98}, 
the authors gave a $2$-competitive algorithm for
rectilinear polygon exploration with unlimited vision. 
The case of non-rectilinear polygons (without obstacles) 
was also studied in~\cite{DKP91,HIKK01}
and a competitive algorithm was given in this case.

For polygonal environments with an arbitrary number of polygonal
obstacles, it was shown in~\cite{DKP98} that no competitive strategy
exists, even if all obstacles are parallelograms. Later, this result
was improved in~\cite{AKS02} by giving a lower bound in
$\Omega(\sqrt{k})$ for the competitive ratio of any on-line algorithm
exploring a polygon with $k$ obstacles. This bound remains
true even for rectangular
obstacles. Nevertheless, if the number of obstacles is bounded by a
constant $k$, then there exists a competitive algorithm with
competitive ratio in $O(k)$~\cite{DKP91}.

Exploration of polygons by a robot with limited vision has been studied,
e.g., in \cite{GB01,GBBS08,N92}. In \cite{GB01} 
the authors described an on-line algorithm with
competitive ratio $1+3(\Pi D/A)$, where $\Pi$ is a quantity depending
on the perimeter of the polygon, $D$ is the area seen by the robot, and $A$ is the area of
the polygon. 
In \cite{N92} the author studied exploration of the
boundary of a terrain with limited vision. The cost of exploration of arbitrary terrains with obstacles, both for limited and unlimited vision, 
was studied in \cite{CILP}.


Navigation in a $n\times n$ square room filled with rectangle obstacles
aligned with sides of the square
was considered in \cite{BBFY,BPFKRS,BRS,PY}. It was shown in \cite{BBFY}
that the navigation
from a corner to the center of a room can be performed with a
competitive ratio $O(\log n)$, only
using tactile information (i.e., the robot modeled as a point sees an obstacle only when it touches it). 
No deterministic algorithm can achieve a better
competitive ratio, even with
unlimited vision~\cite{BBFY}. For navigation between any pair of
points, there
is a deterministic algorithm achieving a competitive ratio of
$O(\sqrt{n})$~\cite{BRS}. No deterministic
algorithm can achieve a better competitive ratio~\cite{PY}. However,
there is a randomized approach performing
navigation with a competitive ratio of $O(n^{\frac{4}{9}}\log
n)$~\cite{BPFKRS}.


{\bf Algorithms with advice.}
The paradigm of algorithms with advice was developed mostly for tasks in graphs.
Providing arbitrary types of knowledge that can be used to increase efficiency of solutions to network problems 
 has been
proposed in \cite{AKM01,DP,EFKR,FGIP,FIP1,FIP2,FKL,FP,FPR,GPPR02,IKP,KKKP02,KKP05,MP,SN,TZ05}. This approach was referred to as
{\em algorithms with advice}.  
The advice is given either to the nodes of the network or to mobile agents performing some task in it.
In the first case, instead of advice, the term {\em informative labeling schemes} is sometimes used if different nodes can get different information.

Several authors studied the minimum size of advice required to solve
network problems in an efficient way. 
In \cite{FIP1}, the authors compared the minimum size of advice required to
solve two information dissemination problems using a linear number of messages. 
In \cite{FKL}, it was shown that advice of constant size given to the nodes enables the distributed construction of a minimum
spanning tree in logarithmic time. 
In \cite{DKM,EFKR}, the advice paradigm was used for online problems.
In \cite{FGIP}, the authors established lower bounds on the size of advice 
needed to beat time $\Theta(\log^*n)$
for 3-coloring cycles and to achieve time $\Theta(\log^*n)$ for 3-coloring unoriented trees.  
In the case of \cite{SN}, the issue was not efficiency but feasibility: it
was shown that $\Theta(n\log n)$ is the minimum size of advice
required to perform monotone connected graph clearing.
In \cite{IKP}, the authors studied radio networks for
which it is possible to perform centralized broadcasting in constant time. They proved that constant time is achievable with
$O(n)$ bits of advice in such networks, while
$o(n)$ bits are not enough. In \cite{FPR}, the authors studied the problem of topology recognition with advice given to the nodes. 
In \cite{DP}, the task of drawing an isomorphic map by an agent in a graph was considered, and the problem was to determine the minimum advice that has to be given to the agent for the task to be feasible. 
 Leader election with advice was studied in \cite{GMP} for trees, and in \cite{DiPe} for arbitrary graphs.
Graph exploration with advice was studied in \cite{BFU,GP} and treasure hunt with advice in graph environments was investigated in \cite{KKKS,MP}.

\section{Regular terrains}

In this section we give three results concerning treasure hunt in regular terrains. First we consider the case $L>\lambda$, where
 $L$ is the length of a shortest path in the terrain between the initial position of the agent and the location of the treasure, and  $\lambda$ is the accessibility of the treasure. In this case we construct a treasure hunt algorithm working in any regular terrain at cost $O(L)$, using advice of size $O(\log(L/\lambda))$  and we show that this size of advice is optimal for the class of regular terrains. Then we consider the case $L\leq \lambda$ and provide a a treasure hunt algorithm working at cost $O(L)$ for all regular terrains, without any advice.

In our algorithm for $L>\lambda$ we will need to convey advice that is conceptually a pair $(a_1,a_2)$, where $a_i$ are positive integers. However, by definition,  the advice has to be a single binary string, hence it is important to 
efficiently and unambiguously code such pairs as binary strings, so that the decoding be unambiguous as well and correctly restore the coded pair. This can be done as follows. A pair   $(a_1,a_2)$ can be viewed as a string over the 3-symbol alphabet with symbols $0,1$ and comma, where  $a_1$ and $a_2$ are represented in binary. Code any such string replacing 0 by 01, 1 by 10, and comma by 11. Denote the obtained binary string by $Code(a_1,a_2)$. It is clear that the pair $(a_1,a_2)$ can be unambiguously decoded from $Code(a_1,a_2)$ and that the length of $Code(a_1,a_2)$ is $O(\log (\max (a_1,a_2))$.

We start by describing the procedure $Walk(\gamma,x)$, that is at the core of both our algorithms for regular terrains. Its high-level idea is the following. Suppose that the initial position of the agent is $p$ and that $N$ is the half-line  starting at $p$ that forms angle $\gamma$ with the direction North. This half-line can possibly intersect some obstacles in the terrain. The aim of the procedure is to travel along the line $N$ circumventing the encountered obstacles in an efficient way, until the total trajectory travelled by the agent has length $x$. The agent walks along $N$ starting at $p$. When it hits an obstacle at a point $r$, the agent finds the other point  $r'$ of the intersection of $N$ with the perimeter of the obstacle (such a point $r'$ is unique by the convexity of the obstacle), using a version of the Cow Path walk (searching for an unknown point in the line, cf. \cite{BCR}) executed on the perimeter. Then the agent goes further along the line $N$, circumventing each encountered obstacle as above, until the total trajectory traveled by the agent has length $x$. We will show that if a point $q$ is in the line $N$ at distance $y$ from $p$ and outside of all obstacles then the smallest $x$ such that $Walk(\gamma,x)$ reaches $q$ is $O(y)$.
This is due to the fact that if a convex $c$-fat polygon is cut by a line then the smaller part of its perimeter between the cutting points is only $d$ times larger than the Euclidean distance between these points, where $d$ depends only on $c$. This is proved in Lemma~\ref{fat}. 

Suppose that the agent hits an obstacle $O$ at a point $r$ while moving along the line $N$. The agent starts searching for the other point  $r'$ of intersection of $N$ with the perimeter $R$ of the obstacle $O$ by using Procedure $CowPath(R,N,r)$ described below. $CowPath$ is a version of the Cow Path walk on the line transposed to the walk on the perimeter $R$. 

The agent identifies two directions of travelling on $R$, call these directions $dir_1$ and $dir_2$, defined as follows.
Let $a$ and $b$, with $a\leq b$, be the distances between $r$ and the two closest vertices of the polygon $O$ along the perimeter $R$.

If $a<b$ and $a\leq 1$ then $dir_1$ is towards the vertex at distance $a$. (In this case the agent can see this vertex).
Otherwise (either both distances $a$ and $b$ are equal, or they are both larger than 1), the agent could start in any of the two directions but it must be unambiguously defined. There are two cases.

If the point $r$ is one of the vertices of the polygon $O$ then $dir_1$ corresponds to the side adjacent to $r$ forming a smaller angle with direction North.  

If the point $r$ lies in the interior of a side $e$ of the polygon $O$ then $dir_1$ is defined as follows.
If $e$ is horizontal, then direction $dir_1$ corresponds to West. Otherwise, the direction $dir_1$ corresponds to the part of $e$ in the Northern half-plane defined by the horizontal line passing through $r$. In all cases, $dir_2$ is the other direction than $dir_1$. 

%
%

The agent walks on the perimeter $R$ of the obstacle $O$ starting at point $r$. Let $z=\min(1,a)$. First, it goes in direction $dir_1$ at distance $z$. Then the agent goes back to $r$ and goes in direction $dir_2$ until distance $2z$ is travelled or until it visits the point $r'$, whichever comes first. The agent swings in this way each time doubling the travelled distance until reaching point $r'$, when it finishes the execution of the procedure. Notice that the agent starts with distance $z$, rather than distance 1, as opposed to the classic Cow Path walk (that assumes that the target is at distance at least 1) because point $r'$ could be much closer to $r$ than 1 along $R$. Starting with distance $z$ is safe, as $z$ is smaller than the shorter path between $r$ and $r'$ along $R$.
 Algorithm \ref{CowPath} gives the pseudocode of Procedure $CowPath(R,N,r)$.

  \begin{algorithm}
\Begin
{
Compute $dir_1$, $dir_2$ and $z$\\
$i \leftarrow z$ \\
\While{the point $r'$ on the line $N$ other than $r$ is not visited}
{
Go on $R$ in direction $dir_1$ {\bf{until}} ($r'$ is visited) $or$ (distance $i$ is traveled)\\ 
\If{the point $r'$ is visited}{Exit}
\Else{Go back on $R$ to point $r$\\ $i \leftarrow 2i$}
Go on $R$ in direction $dir_2$ {\bf{until}} ($r'$ is visited) $or$ (distance $i$ is traveled)\\
\If{the point $r'$ is visited}{Exit}
\Else{Go back on $R$ to point $r$\\ $i \leftarrow 2i$}
}
}
\caption{Procedure $CowPath(R,N,r)$}
\label{CowPath}
\end{algorithm}

After finding the point $r'$ on the perimeter of the obstacle, the agent continues along the line $N$, using procedure $CowPath$ whenever an obstacle is hit,
until it travels a total trajectory of length $x$.
 Algorithm \ref{algo-walk} gives the pseudocode of procedure $Walk(\gamma, x)$. The algorithm is interrupted when the length of the total  trajectory traveled by the agent is $x$.

\pagebreak

\begin{algorithm}
\Begin
{
Let $N$ be the half-line starting at $p$ that forms angle $\gamma$ with the direction North\\ 
{\bf repeat}\\
\Begin
{
Go along the line $N$\\
\If{the perimeter $R$ of an obstacle is hit at point $r$}
{Call Procedure $CowPath(R,N,r)$}
}

}
\caption{Procedure $Walk(\gamma, x)$}
\label{algo-walk}
\end{algorithm} 

{\bf Remark.}
Since perimeters of obstacles are included in the terrain, if the line $N$ along which the agent travels intersects only the perimeter of an obstacle, this is not considered as hitting the obstacle, and the agent continues its travel along $N$.

The following geometric result will be crucial for the analysis of our algorithms.

\begin{lemma}\label{fat}
Let $c>1$ be a constant and let $P$ be a convex $c$-fat polygon. Consider any line $M$ cutting $P$ at points $a$ and $b$. Then, there is a constant $d$ such that the length $s$ of the smaller part of the perimeter of $P$ between points $a$ and $b$ is at most $d\cdot|ab|$.
\end{lemma}

\begin{proof}
Let $\alpha$ and $\beta$ be the interior angles at points $a$ and $b$, induced by the cut of the polygon $P$ by the line $M$. Let $|ab|=x$. We consider the following three cases:\\

\noindent
{\bf{Case 1.}} $\alpha=\pi-\beta$

In this case the sides of the polygon that are cut by the line $M$ are parallel, see Fig. $\ref{case-1}$.
Let $x'$ be the distance between the parallel lines containing these sides. In Fig. $\ref{case-1}$,  $x'=|ab'|$. By the definition of $r$, we have, $2r \leq |ab'|$. As the angle $\angle$ $ab'b$ is a right angle, we have  $2r \leq |ab'|\leq |ab|$. The length $s$ of the smaller part of the perimeter of $P$ between points $a$ and $b$ is at most $2\pi R$ because the entire convex polygon $P$ is contained in a circle of diameter $2R$. Hence, $s/|ab| \leq 2\pi R/2r \leq \pi\cdot c$. This proves the lemma in Case 1.\\

\begin{figure}[tp]
\centering
\includegraphics[scale=0.8]{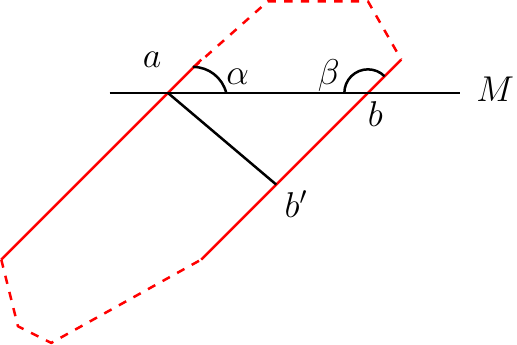}
\caption{Proof of Lemma \ref{fat} -- Case 1: $\alpha= \pi-\beta$}
\label{case-1}
\end{figure}

\noindent
{\bf{Case 2.}} $\alpha< \pi/2$ and $\beta\leq \pi/2$, or $\alpha> \pi/2$ and $\beta\geq \pi/2$

\begin{figure}[tp]
\centering
\includegraphics[scale=0.5]{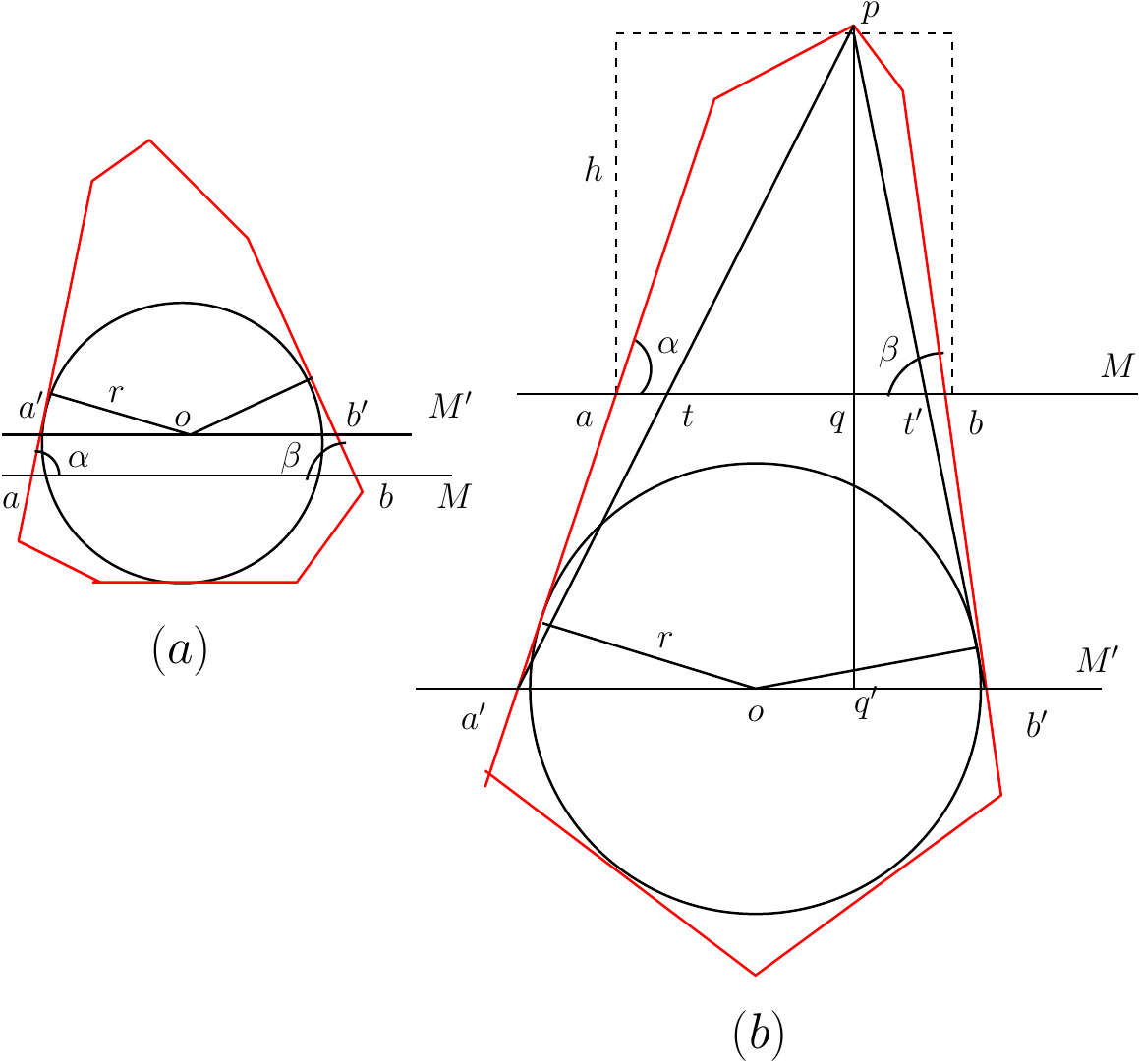}
\caption{Proof of Lemma \ref{fat} -- Case 2: $\alpha < \pi/2$ and $\beta \leq \pi/2$}
\label{case-2}
\end{figure}

It is enough to consider the first conjunction, as the second follows by interchanging the two parts of the perimeter of $P$ cut by $M$.
Consider the part of the perimeter of $P$ between points $a$ and $b$ corresponding to angles $\alpha$ and $\beta$ in the cut by line $M$. (This is the upper part of the perimeter in Fig. $\ref{case-2}$). There are two subcases.\\

\noindent
Subcase 2.1. The center of the largest circle contained in $P$ is in the upper part of $P$: Fig. \ref{case-2} (a).

In this case $x=|ab|\geq |a'b'|\geq 2r$, see  Fig. \ref{case-2} (a). The rest of the proof is as in Case 1.\\

\noindent
Subcase 2.2. The center of the largest circle contained in $P$ is in the lower part of $P$ : Fig. \ref{case-2} (b).

Let $p$ be the point in the upper part of the perimeter of $P$ farthest from the line $M$. Consider the line $M'$ parallel to $M$ containing the center of the
largest circle contained in $P$. Let $a'$ and $b'$ be the points in which the line $M'$ cuts the lines containing the sides of $P$ that are cut by $M$.
Let $y=|a'b'|$. Consider the triangle $a'b'p$. Let $t$ and $t'$ be the points where line $M$ cuts the sides $a'p$ and $b'p$ of this triangle, respectively.
Let $x'=|tt'|$. Hence $x'\leq x$. Let $q$ and $q'$ be the points of intersection with $M$ and $M'$, respectively, of the line perpendicular to $M$ and containing $p$.
Let $h=|pq|$ and $h'=|pq'|$. 
We have $h/x'=h'/y$. By definition of $R$ and $r$ we have $h'\leq 2R$ and $y \geq 2r$. Hence $h/x'\leq (2R)/(2r)\leq c$. Since $x'\leq x$, we have $h/x\leq c$.
Consider the rectangle (with dotted sides in Fig.  \ref{case-2} (b))  whose one side is the segment $ab$ and that contains $p$. The length of the upper part
of the perimeter of $P$ is at most $2h+x$. Hence this length is at most $(2c+1)x$, which proves the lemma in Case 2. 

\noindent
{\bf{Case 3.}} $\beta \neq \pi-\alpha$, $\alpha<\pi/2$ and  $\beta> \pi/2$

Let $\beta'=\pi-\beta$ be the exterior angle at point $b$ induced by the cut of polygon $P$ by the line $M$, see Fig. \ref{case-3}. Without loss of generality, we can assume $\alpha < \beta'$ in the rest of the proof of this case. For $\alpha >\beta'$, the proof is similar. Let $d$ be the endpoint of the side of the polygon $P$ containing point $a$, in the part of the perimeter corresponding to angle $\alpha$ (the upper part in Fig. \ref{case-3}). Let $y=|ad|$. Let $q$ be the point of intersection of the line containing points $a$ and $d$ and the line perpendicular to the side of polygon $P$ containing point $b$. Let $z=|dq|$ and $x''=|bq|$.  The segment $bq$ cuts the perimeter of the polygon $P$ at point $p$, see Fig. \ref{case-3}. Let $x'=|bp|$,  and let $t=|dp|$. Denote the angles $\angle$ $abq$ and $\angle$ $aqb$ by $\gamma$ and $\delta$ respectively. 

Since $\beta-\gamma = \pi/2$, we have $\gamma = \pi/2-\beta'$. As $\alpha < \beta'$, we have $\gamma = \pi/2-\beta' < \pi/2-\alpha$. Consider the triangle $aqb$.  Since $\alpha+\gamma+\delta = \pi$ and $\alpha+\gamma < \pi/2$, we have $\delta > \pi/2$. Hence, $y+z+x'' \leq 2x$. Since $t \leq z+(x''-x')$,
we have $y+t+x' \leq 2x$. Let the length of the part of the perimeter $P$, between the points $p$ and $b$ (clockwise in Fig. \ref{case-3}) be $s'$. By Case 2, there is a constant $k \geq 1$ such that $s'/x' \leq k$. We have $(y+t+s')/x \leq (y+t+kx')/x \leq k(y+t+x')/x$. Since $y+t+x' \leq 2x$, we have $k(y+t+x')/x \leq 2k$. This implies that $(y+t+s')/x \leq 2k$. This proves the lemma in Case~3. 
\begin{figure}[tp]
\centering
\includegraphics[scale=0.8]{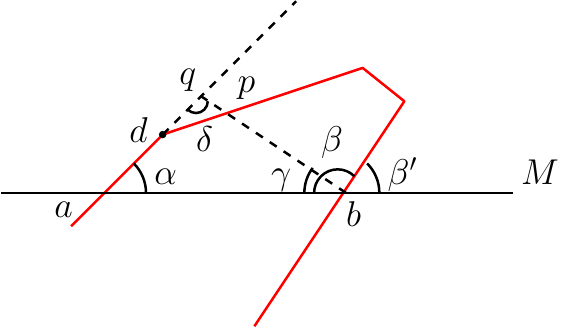}
\caption{Proof of Lemma \ref{fat} -- Case 3: $\beta \neq \pi-\alpha$, $\alpha<\pi/2$ and $\beta>\pi/2$}
\label{case-3}
\end{figure}
\end{proof}

We use Lemma \ref{fat} to prove the following result that will be used in the analysis of our algorithms.

\begin{lemma}\label{walk}
Let $N$ be the half-line  starting at $p$ that forms angle $\gamma$ with the direction North. 
Let $q$ be a point in the line $N$ at distance $y$ from $p$ and outside of all obstacles. Then the smallest $x$ such that $Walk(\gamma,x)$ reaches $q$ is $O(y)$.
\end{lemma}

\begin{proof}
Let $O_1, \dots ,O_t$ be the obstacles intersecting line $N$ between $p$ and $q$, in the order in which they are intersecting it, when the segment $pq$ is traversed from $p$ to $q$.
Let $r_1,r_1',r_2,r_2',\dots, r_t,r_t'$ be the points of intersection of $N$ with the perimeters of the obstacles, so that $r_i,r_i'$ belong to the perimeter of $O_i$.
Let $\pi_1$ be the segment $pr_1$, let $\pi_i$, for $i=2,\dots, t$, be the segment $r_{i-1}'r_{i}$, and let $\pi_{t+1}$ be the segment $r'_tq$. Let $\mu_i$, for $i=1,\dots,t$, be the shorter part of the perimeter of the obstacle $O_i$, between $r_i$ and $r_i'$. (In the case of equality, take any of the two parts).
The polygonal line $Q$ is defined as the concatenation of segments $\pi _1,\mu_1,\pi_2,\mu_2,\dots, \pi_t,\mu_t,\pi_{t+1}$ (cf. Fig. \ref{polyg}).  By Lemma \ref{fat}, the length of each part $\mu_i$ is at most $d$ times larger than the length of the segment $r_ir_i'$, for some constant $d\geq 1$, and thus the length of $Q$ is at most $d y$.

Consider the smallest $x$ such that $Walk(\gamma,x)$ reaches $q$.
Let $\mathcal{T}$ be the trajectory of the agent, starting from point $p$ and executing procedure $Walk(\gamma, x)$.
By definition, the length of $\mathcal{T}$  is $x$.
The competitive ratio of the Cow Path walk
on the line is 9 (cf. \cite{BCR}), and this holds regardless of the length of the first segment used, as long as it does not exceed the distance to the target (this length is 1 in the classic case and $z$ in our case). Hence the length of the part of the trajectory $\mathcal{T}$  corresponding to the perimeter of $O_i$ is at most 9 times larger than $\mu_i$. Hence the length of $\mathcal{T}$ is at most 9 times larger than the length of $Q$, and hence it is at most $9dy$. This proves the lemma. 
\end{proof}
\begin{figure}[tp]
\centering
\includegraphics[scale=0.8]{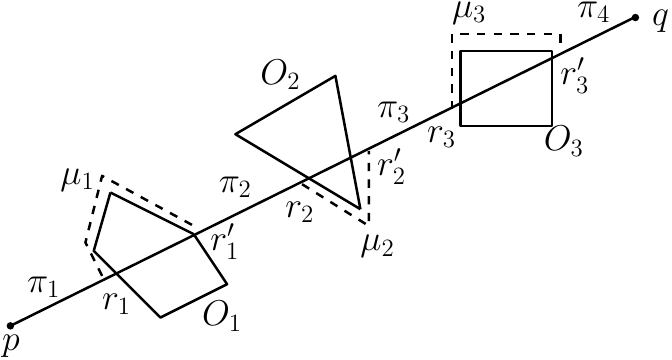}
\caption{Obstacles intersecting the line $N$ and the construction of the polygonal line $Q$}
\label{polyg}
\end{figure}

\subsection{The case $L>\lambda$}

In the case $L>\lambda$, the high-level idea of the advice and of the algorithm that uses it is the following. Let $k=\lceil7L/\lambda \rceil$. The oracle constructs $k$ half-lines $H_0,\dots,H_{k-1}$, starting at the initial position $p$ of the agent, such that $H_0$ goes North and the angle between consecutive half-lines is  $2\pi/k$. For $i=0,\dots, k-1$, the sector $S_i$ is defined as the set of points in the plane between $H_i$ and $H_{(i+1)\mod k}$, including $H_i$ and excluding $H_{(i+1)\mod k}$. Suppose that the treasure is in sector $S_j$. The oracle gives the string $Code(k,j)$ as advice. The algorithm decodes the couple $(k,j)$, and constructs 
angles $\gamma_1=2\pi j/k$ and $\gamma_2= 2\pi(j+1)/k$. Let $M_i$, for $i=1,2$, be the half-lines starting at $p$ forming angle $\gamma_i$ with direction North (these are the half-lines bounding sector $S_j$). Let $w'$ be the distance between $p$ and the closest point in any of the lines $M_1, M_2$ that does not belong to the terrain. Let $w=\min(1,w')$. The agent walks in phases, alternately on $M_1$ and $M_2$, using procedures $Walk(\gamma_1, 2^jw)$ and $Walk(\gamma_2, 2^jw)$ in phase $j=0,1,\dots$, and backtracking to $p$ in each phase using the reverse trajectory,  until it sees the treasure. The pseudocode of the algorithm is given in Algorithm \ref{far-treasure}.
It is interrupted as soon as the agent sees the treasure.

 \begin{algorithm}
\Begin
{
Decode the couple  $(k,j)$\\
Compute angles $\gamma_1=2\pi j/k$ and $\gamma_2= 2\pi(j+1)/k$\\
Find $w$\\
$j \leftarrow 0$\\
{\bf repeat}\\
\Begin{
$Walk(\gamma_1,2^jw)$\\
backtrack to $p$\\
$Walk(\gamma_2,2^jw)$\\
backtrack to $p$\\
$j \leftarrow j+1$\\

}
}
\caption{Algorithm FarTreasure}
\label{far-treasure}
\end{algorithm}

The following theorem is the main positive result of this section.

\begin{theorem}\label{positive}
For $L>\lambda$, Algorithm $FarTreasure$ accomplishes treasure hunt in any regular terrain at cost $O(L)$ with advice of size
$O(\log(L/\lambda))$.
 \end{theorem}
 
 In order to prove the theorem, we need the following lemmas.
 
 \begin{lemma}\label{advice-size}
The size of advice is $O(\log (\frac{L}{\lambda}))$.
\end{lemma}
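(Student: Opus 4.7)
The plan is to bound $\max(a_1, |a_2|, |a_3|)$ by $O(L/\lambda)$ and then invoke the length estimate $|Code(a_1, a_2, a_3)| = O(\log \max(a_1, |a_2|, |a_3|))$ that was established during the description of the coding scheme.

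First I would handle $a_1$. By its definition $a_1 = \lceil 2/\lambda \rceil$, so $a_1 = O(1/\lambda)$ and hence $\log a_1 = O(\log(1/\lambda)) \le O(\log(L/\lambda))$ (using that $L$ is bounded below, for instance by the radius of the disc around the treasure, so $L/\lambda \ge 1$; if $L < \lambda$ then the agent trivially sees the treasure from $p$ and no advice is needed).

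Next I would handle $a_2$ and $a_3$. By construction, $a_2$ and $a_3$ are the signed column and row indices of a tile $T$ that lies in the disc $\Delta$ of radius $\lambda$ centered at the treasure $q$. Since the tile $T$ lies in $\Delta$, the Euclidean distance from the initial position $p$ of the agent to any point of $T$ is at most the straight-line distance $|pq|+\lambda$, which in turn is at most $L+\lambda$ (the Euclidean distance cannot exceed the length of any path in the terrain connecting the two points). The tiles have side $1/a_1$, so the horizontal displacement of $T$ from $p$ is at most $(L+\lambda)\cdot a_1$ tile widths, giving $|a_2| \le (L+\lambda)\cdot a_1$, and the same bound holds for $|a_3|$.

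Combining, $|a_2|, |a_3| \le (L+\lambda)\lceil 2/\lambda\rceil = O(L/\lambda)$, and the same estimate dominates $a_1$. Therefore $\max(a_1, |a_2|, |a_3|) = O(L/\lambda)$, and the length bound on the encoding gives $|Code(a_1,a_2,a_3)| = O(\log(L/\lambda))$, as claimed. No step is really hard; the only thing to be careful about is justifying that straight-line distance is at most the terrain-path length $L$ and remembering that $\log$ applied to $\max$ of the three parameters absorbs the signs and the $\lceil\cdot\rceil$.
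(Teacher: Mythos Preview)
Your proof is correct and follows essentially the same approach as the paper: bound $a_1$ directly from its definition, bound $|a_2|$ and $|a_3|$ by observing that the tile $T$ lies within Euclidean distance $L+\lambda$ of $p$ (since $|pq|\le L$ and $T\subset\Delta$) and dividing by the tile side $1/a_1$, then invoke the $O(\log\max(a_1,|a_2|,|a_3|))$ encoding bound. The paper phrases the displacement bound via the center $q'$ of $T$ and writes the explicit constant $|a_2|\le 3L/\lambda+4$, but the argument is the same; your extra remark about the degenerate case $L<\lambda$ is a nicety the paper omits.
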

\begin{proof}
As mentioned in the description of the string $Code(a_1,a_2)$, the length of this string is $O(\log (\max(a_1,a_2)))$.
We have $k=\lceil7L/\lambda \rceil$ and $j\leq k$, hence $\log k \in O(\log (L/\lambda))$ and $\log j \in O(\log (L/\lambda))$, which proves the lemma. 
\end{proof}

\begin{lemma}\label{cost}
For $L>\lambda$, Algorithm $FarTreasure$ correctly accomplishes treasure hunt in any regular terrain at cost $O(L)$.
\end{lemma}

\begin{proof}
The proof relies on the following geometric claim.

\begin{claim}\label{claim1}
Consider a sector  with angle $\alpha=2\pi/k$ formed by half-lines $M_1$ and $M_2$ starting at a point $p$. Let $q$ be a point in this sector at distance at most $L>\lambda$ from $p$. Then any circle of radius $\lambda$ containing point $q$ must intersect at least one of the half-lines $M_1$ or $M_2$ at distance at most $2L$ from point $p$. 
\end{claim}
In order to prove the claim, we first show that any circle of radius $\lambda$ containing point $q$ must intersect at least one of the half-lines $M_1$ or $M_2$. Suppose by contradicition that there
exists such a circle $C$ not intersecting any of these lines. $C$ has the center at a distance
$r \leq \lambda +L<2L$ from $p$. The largest circle containing $q$ with center at distance $r$ from $p$ not intersecting any of the half-lines $M_1$ or $M_2$ has the center $c$ on the bisectrix of angle $\alpha$. This center $c$ is at distance $x$ from both lines $M_1$ and $M_2$. By our assumption, $x\geq \lambda$.  The angle $\beta=\alpha/2$ between the bisectrix and any of the lines $M_1$ and $M_2$ corresponds to the arc of length $\rho=\beta r$ of a circle with radius $r$ centered at $p$. We have $x<\rho=\beta r= \pi r/k<\frac{\pi \cdot 2L}{7L/\lambda}<\lambda$. This is a contradiction that proves
that any circle of radius $\lambda$ containing point $q$ must intersect at least one of the half-lines $M_1$ or $M_2$.

It remains to show that some point of intersection of such a circle with at least one of the half-lines $M_1$ or $M_2$ is at distance at most $2L$ from $p$. The distance from $p$ to the closest such point must be smaller than $r<2L$, which concludes the proof of the claim.

Using the above claim, the lemma can be proved as follows. Let $\Delta$ be a circle of radius $\lambda$ containing the treasure and contained in the terrain.
Let $M_i$, for $i=1,2$, be the half-line starting at $p$ forming angle $\gamma_i$ with direction North (these are the half-lines bounding the sector containing the treasure). Since the treasure is at distance at most $L$ from $p$, it follows from the claim that there exists a point $q$ in $\Delta$ and situated in one of the lines $M_i$, at distance at most $2L$ from $p$. By Lemma \ref{walk}, 
the agent will get to point $q$ during the execution of the repeat loop of Algorithm $FarTreasure$ for the smallest $j$ such that $bL \leq 2^jw $, where $b$ is some constant.
Denote this smallest $j$ by $j_0$. At the point $q$ the agent sees the treasure. Hence the cost of Algorithm $FarTreasure$ is at most
$4w(1+2+\cdots +2^{j_0})\leq 8w2^{j_0}$. Since by definition of $j_0$ we have $w2^{j_0-1}<bL$, the cost of Algorithm $FarTreasure$ is at most $16bL$, which concludes the proof of the lemma.
\end{proof}

Now the proof of Theorem \ref{positive} is a direct consequence of Lemmas \ref{advice-size} and \ref{cost}.

We end this section by showing that the size $O(\log (L/\lambda))$  of advice used by Algorithm $FarTreasure$ is optimal for the class of regular terrains.
In order to prove this, we construct a class of regular terrains with treasure accessibility $\lambda$ for which any treasure hunt algorithm working at cost $O(L)$
must use advice at least $\frac{1}{2}\log (L/\lambda)$.

We start the construction by defining the {\em gadget} $G(o)$, for
any point $o$ in the plane. The gadget consists of 8 squares of side $x=3\lambda/2$, situated as follows (see Fig.
\ref{angle}). There are four squares $\sigma_N$, $\sigma_E$, $\sigma_S$, $\sigma_W$ whose centers are at distance $\lambda+x/2$ from point $o$, respectively, North, East, South and West from this point.
The remaining four squares are placed as follows. Let  $y=(2\lambda-x)/2$. 
The center of $\sigma_{NW}$ is West of the center of $\sigma_N$, at distance $x+y$ from it.
The center of $\sigma_{NE}$ is East of the center of $\sigma_N$, at distance $x+y$ from it.
The center of $\sigma_{SE}$ is East of the center of $\sigma_S$, at distance $x+y$ from it.
The center of $\sigma_{SW}$ is West of the center of $\sigma_S$, at distance $x+y$ from it.
Notice that the NW corner of $\sigma_{NW}$, the NE corner of $\sigma_{NE}$, the SE corner of $\sigma_{SE}$ and the SW corner of $\sigma_{SW}$
are corners of a square $S(o)$ of side $3x+2y=5\lambda$ which is the convex hull of the eight squares of the gadget.

\begin{lemma}\label{seeing}
An agent located outside of the square $S(o)$ cannot see the point $o$.
\end{lemma}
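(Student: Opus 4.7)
The plan is to reduce the claim to an angular-coverage statement at $o$: I place $o$ at the origin and show that every ray from $o$ enters the interior of at least one of the eight obstacle squares. Since all eight squares lie inside the convex hull $S(o)$, any point $P$ with $P \notin S(o)$ lies on such a ray, and the segment $oP$ must meet the interior of some square before reaching $P$; hence $oP$ is not contained in $\cT$ and $P$ cannot see $o$.

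To compute the arcs subtended at $o$ by each square, I work out the corners explicitly. Since $x=3\lambda/2$ and $y=\lambda/4$, the cardinal square $\sigma_E$ has its west edge on the line $x=\lambda$ with endpoints $(\lambda,\pm 3\lambda/4)$, so from $o$ it blocks the polar arc $[-\arctan(3/4),\arctan(3/4)]$, of angular width $2\arctan(3/4)$; by the four-fold symmetry of the gadget, each cardinal square blocks a congruent arc of this width centered on its cardinal direction. The diagonal square $\sigma_{NE}$ has corners $(\lambda,\lambda)$, $(5\lambda/2,\lambda)$, $(\lambda,5\lambda/2)$, $(5\lambda/2,5\lambda/2)$; its extreme corners as seen from $o$ are $(5\lambda/2,\lambda)$ and $(\lambda,5\lambda/2)$, so it blocks the arc $[\arctan(2/5),\arctan(5/2)]$. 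The other three diagonal squares are analogous by symmetry.

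The key inequalities ensuring that the eight arcs cover the full circle with overlaps are $\arctan(3/4)>\arctan(2/5)$ (immediate from $3/4>2/5$) and, using $\arctan(5/2)=\pi/2-\arctan(2/5)$, the consequence $\arctan(5/2)+\arctan(3/4)>\pi/2$. The first says the arc of $\sigma_E$ extends past the lower endpoint of the arc of $\sigma_{NE}$, and the second (after rewriting as $\arctan(5/2) > \pi/2-\arctan(3/4)$) says the arc of $\sigma_{NE}$ extends past the lower endpoint (in the first quadrant) of the arc of $\sigma_N$. Hence $\sigma_E$, $\sigma_{NE}$, $\sigma_N$ cover the first quadrant with pairwise non-degenerate overlaps; by symmetry the entire circle is covered in the same way.

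To finish, observe that every ray from $o$ either lies strictly in the interior of some square's arc or lies on the common boundary of two consecutive arcs; in the latter case the overlap inequalities show the ray is strictly interior to the arc of one of the neighboring squares, so in all cases the ray enters the interior of some obstacle. Consequently, any segment $oP$ with $P\notin S(o)$ meets an obstacle interior and $P$ cannot see $o$. The only nontrivial step is the arithmetic of the two arctangent inequalities, which I do not expect to pose a serious obstacle.
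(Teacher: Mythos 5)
Your proof is correct, and I verified the coordinates: with $x=3\lambda/2$ and $y=\lambda/4$ the west edge of $\sigma_E$ is indeed $\{\lambda\}\times[-3\lambda/4,3\lambda/4]$ and $\sigma_{NE}=[\lambda,5\lambda/2]^2$, so the arcs $(-\arctan(3/4),\arctan(3/4))$, $(\arctan(2/5),\arctan(5/2))$ and the rotate of the first about North do cover the first quadrant with strict overlaps, since $2/5<3/4$ and $5/2>4/3=\tan(\pi/2-\arctan(3/4))$. The underlying idea is the same as in the paper --- an elementary angle computation showing that no line of sight from $o$ escapes through a gap between adjacent squares of the gadget --- but the bookkeeping is organized differently. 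The paper fixes one gap (between $\sigma_N$ and $\sigma_{NE}$) and compares two angles measured at different vertices: $\theta_2=\arctan(3/4)$ at $o$ (the half-width of $\sigma_N$'s shadow) and $\theta_1=\arctan(1/6)$ at the NW corner of $\sigma_{NE}$ (the slope of the gap's ``channel''), concluding that a ray clearing $\sigma_N$ must enter $\sigma_{NE}$; the symmetry argument then handles the remaining gaps. You instead compute the full angular shadow of each square as seen from $o$ and show the eight open arcs cover the circle. Your version is somewhat more self-contained: it makes explicit the coverage of every direction, including the boundary directions between consecutive arcs, and it cleanly isolates the one convexity fact needed (all eight squares lie in $S(o)$, so a blocked ray yields a blocked segment $oP$ for any $P\notin S(o)$), points the paper leaves implicit behind ``due to the symmetries of the gadget.'' Both reduce to the same kind of arctangent inequalities, so neither buys extra generality.
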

\begin{proof}
Let $a$ be the point North from $o$ at distance $\lambda$ from it. Let $b$ be the SE corner of $\sigma_N$. Let $\theta_2$ be the angle $\angle$ $aob$. Let $c$ be the SW corner of $\sigma_{NE}$. Let $d$ the NW corner of $\sigma_{NE}$. Let $\theta_1$ be the angle $\angle$ $cdb$, see Fig. \ref{angle}. 
Due to the symmetries of the gadget $G(o)$, the lemma follows from the inequality $\theta_2>\theta_1$. This inequality is proved as follows.
$\tan \theta_2=x/(2\lambda)$. $\tan\theta_1=y/x$. Hence 
$$\tan \theta_2=\frac{3\lambda/2}{2\lambda}=3/4>1/6=\frac{\lambda/4}{3\lambda/2}=\frac{(2\lambda-x)/2}{x}=y/x=\tan \theta_1.$$
Hence $\theta_2>\theta_1$.
\end{proof}

\begin{figure}[tp]
\centering
\includegraphics[scale=0.7]{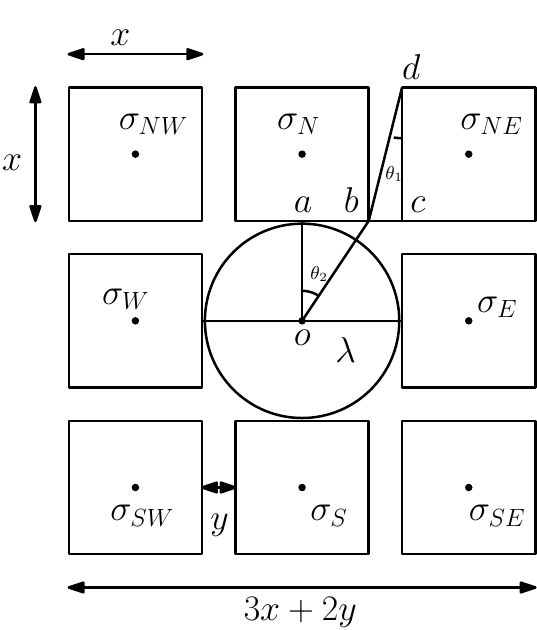}
\caption{The gadget $G(o)$}
\label{angle}
\end{figure}

\begin{theorem}\label{regular lb}
Let $\lambda$ be a real in the interval $(0,1/2]$.
There exist arbitrarily large reals $L$ and regular terrains with an initial position $p$ of the agent and the location $q$ of the treasure in it, such that $L$ is the length of a shortest path in the terrain between $p$ and $q$, $\lambda$ is the accessibility of the treasure, and the agent needs advice of size at least $\frac{1}{2}\log (L/\lambda)$ to accomplish treasure hunt at cost $O(L)$ in this terrain. 
\end{theorem}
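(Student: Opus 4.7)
The plan is to exhibit a family of $N = \Theta((L/\lambda)^2)$ regular terrains $T_1,\dots,T_N$, all sharing the same initial agent position $p$, the same convex outer polygon $\cP_0$, accessibility $\lambda$, and shortest-path distance $\Theta(L)$ from $p$ to their respective treasure, and then to deduce from a packing plus pigeonhole argument that any treasure-hunt algorithm working at cost $O(L)$ in every $T_i$ needs advice of length $\Omega(\log(L/\lambda))$. First I would take $\cP_0$ to be a convex polygon inscribed in the disc of radius $3L$ centered at $p$, and place candidate treasure positions $q_1,\dots,q_N$ on a square grid of spacing $20\lambda$ inside the annulus $\{L/2 \le \|pq\| \le L\}$; since this annulus has area $\Theta(L^2)$, this yields $N = \Theta((L/\lambda)^2)$ points at pairwise distance at least $20\lambda$, and in particular pairwise disjoint squares $S(q_i)$. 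The terrain $T_i$ has as its only obstacles the eight axis-aligned squares of the gadget $G(q_i)$ defined above Lemma~\ref{seeing}. Each such square is convex and $\sqrt{2}$-fat, so every $T_i$ is regular; the disc of radius $\lambda$ around $q_i$ avoids every obstacle of $G(q_i)$ by construction, so the accessibility of the treasure at $q_i$ equals $\lambda$; and the shortest path from $p$ to $q_i$ has length $\Theta(L)$, because the only possible detour is around the unique obstacle of $G(q_i)$ that the segment $pq_i$ meets, which adds $O(\lambda)$.

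Next I would fix any algorithm $A$ using advice of length $m$ that finishes within cost $CL$ in every $T_i$ for an absolute constant $C$, and for each advice string $s\in\{0,1\}^m$ define $\pi_s$ as the trajectory produced by $A$ when run with advice $s$ in the ``empty'' instance consisting of $\cP_0$ with no inner obstacles and no treasure, truncated to length $CL$. The central claim I would then prove is the necessary condition for success: if $A$ with advice $s$ succeeds in $T_i$, then $\pi_s$ must enter $S(q_i)$. Both the eight obstacles of $G(q_i)$ and the treasure at $q_i$ lie inside $S(q_i)$, and by Lemma~\ref{seeing} the treasure cannot be seen from outside $S(q_i)$, so as long as the trajectory is strictly outside $S(q_i)$ every elementary action of the agent (a free move in a chosen direction, a boundary move along $\partial \cP_0$, or a visibility check for the treasure) yields the same outcome in $T_i$ as in the empty instance; the deterministic algorithm therefore generates exactly the same trajectory in the two worlds up to the first time it would enter $S(q_i)$. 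If $\pi_s$ never enters $S(q_i)$ within length $CL$, the agent in $T_i$ likewise follows $\pi_s$ throughout the budget and never sees the treasure, contradicting success.

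The last step is a packing count. Since each $S(q_i)$ is contained in the disc of radius $5\lambda$ around $q_i$, the set of centers $q_i$ for which $\pi_s$ meets $S(q_i)$ lies in the $5\lambda$-tube around $\pi_s$; this tube has area $O(L\lambda)$, and the centers form a $20\lambda$-separated set of density $O(\lambda^{-2})$, so the tube contains $O(L/\lambda)$ of them. Consequently each advice string $s$ accounts for success in at most $O(L/\lambda)$ terrains, which forces $2^m\cdot O(L/\lambda)\ge N=\Theta((L/\lambda)^2)$, and therefore $m\ge\log(L/\lambda)-O(1)=\Omega(\log(L/\lambda))$.

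The step I expect to be hardest is the coincidence-of-trajectories claim: one has to verify, case by case on the elementary move types of the model, that whenever the agent's current position and the entire sub-path it is about to traverse avoid $S(q_i)$, the observation returned in $T_i$ and in the empty instance is identical (no obstacle hit occurs in either world, any boundary move is performed along $\partial \cP_0$ which is the same in both, and by Lemma~\ref{seeing} no treasure-visibility event can occur). Once this is settled, the packing estimate and the counting are routine, and the choice $N=\Theta((L/\lambda)^2)$ delivers the desired $\Omega(\log(L/\lambda))$ bound.
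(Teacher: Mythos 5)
Your proposal is correct and rests on the same three pillars as the paper's proof: the gadget of Lemma~\ref{seeing} forcing the agent to physically enter the $5\lambda$-square around the treasure, a collection of $\Theta((L/\lambda)^2)$ candidate locations pairwise separated by $\Theta(\lambda)$, and a pigeonhole over advice strings combined with the observation that a trajectory of length $O(L)$ can meet only $O(L/\lambda)$ of the separated squares (your tube-packing bound and the paper's ``length at least $5\lambda$ times the number of visited tiles'' are the same estimate). The genuine difference is in how the hard family is built. The paper places \emph{all} $\Theta((L/\lambda)^2)$ gadgets inside one fixed terrain and varies only the treasure position among the gadget centers; since the terrain is identical across all instances and Lemma~\ref{seeing} guarantees the treasure is invisible from outside its own square, the trajectory for a given advice string is automatically the same in all instances until success, and no indistinguishability argument is needed. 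You instead build one terrain per candidate location, each containing a single gadget, and must therefore supply the trajectory-coincidence argument comparing each $T_i$ to the obstacle-free instance --- the step you correctly identify as the delicate one, and which does go through since all obstacles of $T_i$ lie inside $S(q_i)$. Both routes give the $\Omega(\log(L/\lambda))$ bound; the paper's single-terrain version has the additional advantage of matching the literal phrasing of Theorem~\ref{regular lb} (``there exists a regular terrain'' in which only $q$ varies), whereas your version proves the lower bound for the class of regular terrains, which is what the corollary actually needs but is formally a weaker statement than the one announced. A last cosmetic point: make sure your convex $\cP_0$ is chosen fat enough (e.g.\ a square) so that it actually contains all gadgets and the $\lambda$-discs around the $q_i$; ``inscribed in a disc of radius $3L$'' alone does not guarantee this.
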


\begin{proof}
We construct the regular terrain as follows. 
Consider a square with side $A=20k\lambda$, where $k$ is a positive integer. Let the initial position $p$ of the agent be the South-West corner of $S$, see Fig. \ref{lb-1}. Partition the square $S$ into four equal quadrants. Let $S'$ be the top-right quadrant. $S'$ has side of length $A/2$.
Partition $S'$ into $\frac{A^2}{100\lambda^2}$ square tiles
of side $5\lambda$. Tile rows are indexed $1,2,\dots$ from the North side of $S'$ going South, and tile columns, and are indexed $1,2,\dots$ from the West side of $S'$ going East. 
Now, place the gadget $G(o)$ contained in square $S(o)$ of side $5\lambda$ (see Fig. \ref{angle}) in every other tile of odd-indexed tile rows in $S'$ 
(see Fig. \ref{lb-1}, where these tiles are shaded: we will call them shaded tiles in the sequel). The distance between any two shaded tiles is at least $5\lambda$. 
At least 1/4 of all tiles are shaded, so the number of such tiles is at least $\frac{A^2}{400\lambda^2}$. This finishes the description of the terrain.
Since all obstacles are squares, the terrain is regular.  Notice that if the treasure is placed in the center of any shaded tile then $\lambda$ is the accessibility of the treasure.

We prove the theorem by contradiction. Suppose that the minimum size of advice to accomplish treasure hunt at cost $O(L)$ in any terrain from the above described class is less than $\frac{1}{2}\log (L/\lambda)$. Consider all possible locations of the treasure in the center of any of the shaded tiles.
 Notice that the length $L$ of a shortest path in the terrain from $p$ to any such center is in the interval $[\sqrt{2}A/2,\sqrt{2}A/2+A]$.

Using less than $\frac{1}{2}\log (L/\lambda)$ bits, we have at most $\sqrt{L/\lambda}$ different advice strings. By the pigeonhole principle, there is a subset  $\Sigma$ of at least $\frac{A^2/(400\lambda^2)}{\sqrt{L/\lambda}}$ shaded tiles, such that the location of the treasure at their center corresponds to the same advice string. By Lemma \ref{seeing}, in order to see the treasure located at the center of a shaded tile, the agent must get to this tile. 
Since the minimum distance between any two shaded tiles is at least $5\lambda$, 
any trajectory $T$ of the agent that enables it to accomplish treasure hunt when the treasure is located at the center of some tile in the set $\Sigma$, must have length at least 
$\frac{(A^2/(400\lambda^2)}{\sqrt{L/\lambda}} \cdot 5\lambda  =  \frac{ A^2}{80 \sqrt{\lambda L}}$. 
Since $L$ is in the interval $[\sqrt{2}A/2,\sqrt{2}A/2+A]$, the length of $T $\ is at least $c\frac{L^{3/2}}{\sqrt{\lambda}}$ for some positive constant $c$. Since $\lambda \leq 1/2$, the cost of treasure hunt is at least $cL^{3/2}$ and hence cannot be linear in $L$, which gives a contradiction.
\begin{figure}[tp]
\centering
\includegraphics[scale=1.0]{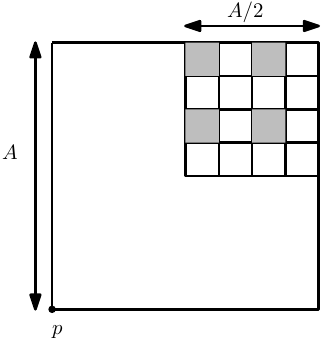}
\caption{Regular terrrain with initial position $p$ of the agent}
\label{lb-1}
\end{figure}
\end{proof}

Theorems \ref{positive} and \ref{regular lb} imply that for $L>\lambda$,  the size $O(\log (L/\lambda))$ of advice used by Algorithm $FarTreasure$ is sufficient
to find the treasure in all regular terrains at cost $O(L)$ and cannot be improved.

\subsection{The case $L\leq \lambda$}

In the case $L\leq \lambda$, we show an algorithm working without any advice and finding the treasure at cost $O(L)$.
The high-level idea of the algorithm is the following. Let $L_i$, for $i=0,\dots, 11$, be the half-lines starting at the initial position $p$ of the agent and forming angle $\gamma_i=\pi i/6$ with direction North. Let $w'$ be the distance between $p$ and the closest point in any of the lines $L_i$ that does not belong to the terrain. Let $w=\min(1,w')$. The agent walks in phases, in a round-robin fashion on lines $L_0,L_1,\dots,L_{11}$, using procedures $Walk(\gamma_0, 2^jw)$,  $Walk(\gamma_1, 2^jw)$, ..., $Walk(\gamma_{11}, 2^jw)$  in phase $j=0,1,\dots$, and backtracking to $p$ in each phase using the reverse trajectory,  until it sees the treasure. The pseudocode of the algorithm is given in Algorithm \ref{close-treasure}.
It is interrupted as soon as the agent sees the treasure.

 \begin{algorithm}
\Begin
{
Find $w$\\
$j \leftarrow 0$\\
{\bf repeat}\\
\Begin{
\For{$i:=0$ to $11$}
{$Walk(\pi i/6,2^jw)$\\
backtrack to $p$\\
}
$j \leftarrow j+1$\\

}
}
\caption{Algorithm CloseTreasure}
\label{close-treasure}
\end{algorithm}

 The following result proves the correctness and estimates the cost of Algorithm $CloseTreasure$.
 
 \begin{theorem}\label{close treasure}
 For $L\leq \lambda$, Algorithm $CloseTreasure$ accomplishes treasure hunt in any regular terrain at cost $O(L)$ with no advice. 
 \end{theorem}
 
 \begin{proof}
 We  consider two cases.
Let $\Delta$ be a circle of radius $\lambda$ containing the treasure and contained in the terrain.

Case 1. $\lambda/9 \leq L\leq \lambda$.

We first prove the following claim, similar to Claim \ref{claim1}

 \begin{claim}\label{claim2}
Consider a sector  with angle $\alpha=\pi/6$ formed by half-lines $M_1$ and $M_2$ starting at a point $p$. Let $q$ be a point in this sector at distance at most $L$ from $p$, where $\lambda/9 \leq L\leq \lambda$. Then any circle of radius $\lambda$ containing point $q$ must intersect at least one of the half-lines $M_1$ or $M_2$ at distance at most $10L$ from point $p$. 
\end{claim}
In order to prove the claim, we first show that any circle of radius $\lambda$ containing point $q$ must intersect at least one of the half-lines $M_1$ or $M_2$. Suppose by contradicition that there
exists such a circle $C$ not intersecting any of these lines. $C$ has the center at a distance
$r \leq \lambda +L<2\lambda$ from $p$. The largest circle containing $q$, with center at distance $r$ from $p$, not intersecting any of the half-lines $M_1$ or $M_2$ has the center $c$ on the bisectrix of angle $\alpha$. This center $c$ is at distance $x$ from both lines $M_1$ and $M_2$. By our assumption, $x\geq \lambda$.  The angle $\beta=\alpha/2=\pi/12$ between the bisectrix and any of the lines $M_1$ and $M_2$ corresponds to the arc of length $\rho=\beta r$ of a circle with radius $r$ centered at $p$. We have $x<\rho=\beta r= \pi r/12\leq\frac{\pi \cdot 2\lambda}{12}<\lambda$. This is a contradiction that proves
that any circle of radius $\lambda$ containing point $q$ must intersect at least one of the half-lines $M_1$ or $M_2$.

It remains to show that some point of intersection of such a circle with at least one of the half-lines $M_1$ or $M_2$ is at distance at most $10L$ from $p$. The distance from $p$ to the closest such point must be smaller than $r\leq \lambda +L \leq 10L$, which concludes the proof of the claim.

 Let $Q$ be the half-line starting at $p$ and containing the treasure.  Suppose that the angle $\gamma$ between the direction North and $Q$ satisfies
 $\pi i/6 \leq \gamma <\pi(i+1)/6$. Hence the treasure is in the sector formed by half-lines $L_i$ and $L_{i+1}$.
Let $M_1=L_i$ and $M_2=L_{i+1}$. Since the treasure is at distance at most $L$ from $p$, it follows from Claim \ref{claim2} that there exists a point $q'$ in $\Delta$ and situated in one of the lines $M_i$, at distance at most $10L$ from $p$. 

Case 2.  $L<\lambda/9$.

In this case we will use the following claim.

\begin{claim}\label{bound}
Consider positive reals $d,\lambda$, such that $\lambda \geq 9d$. Let $PR$ be a line segment of length $\lambda +d$ and let $S$ be a point in this segment such that $|PS|=d$ and $|SR|=\lambda$. Let $M$ be a half-line starting from point $P$, forming an angle $\pi/6$ with the segment $PR$. Let $Q$ be the point in $M$ closest to $P$ such that $|QR|=\lambda$. Then $|PQ| \leq 1.2 \cdot |PS|$.
\end{claim}

\begin{figure}[tp]
\centering
\includegraphics[scale=1.0]{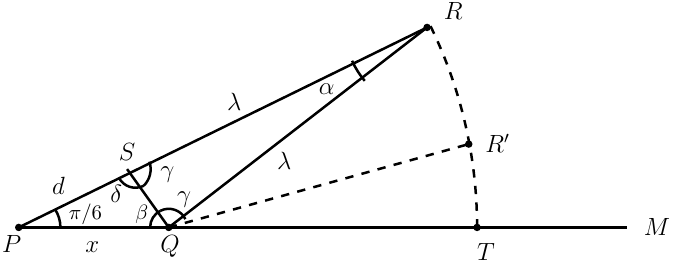}
\caption{Illustration of the proof of Lemma \ref{bound} }
\label{long-range}
\end{figure}

Let $|PQ|=x$ and $|PS|=d$. Denote the angle $\angle PSQ$ by  $\delta$, the angle $\angle SQR$ by $\gamma$, the angle $\angle QRS$ by $\alpha$ and the angle $\angle PQS$ by $\beta$, see Fig. \ref{long-range}. 
 
By the sine rule applied to the triangle $PQR$, we have $\frac{\sin (\pi/6)}{\sin (\beta+\gamma)}=\frac{\lambda}{\lambda +d}$. Since $\lambda/(\lambda +d) \geq 0.9$, we have $\sin(\beta+\gamma) \leq \sin (\pi/6)/0.9 \leq 0.56$. We have arcsin 0.56 = 0.594.
Let $A=(\pi-0.594) \geq 2.54$. Since $Q$ is the point in $M$ closest to $P$ such that $|QR|=\lambda$, we have $(\beta+\gamma) > \pi/2$. Thus, by definition of $A$ we have $(\beta+\gamma) \geq A$. Using the triangle $PQR$, we have $\alpha = \pi-\pi/6-(\beta+\gamma) \leq \frac{5\pi}{6}-A$. Since the triangle $SQR$ is isoceles, we have $\gamma=(\pi-\alpha)/2$. Since $\alpha \leq \frac{5\pi}{6}-A $, we have $\gamma \geq \frac{\pi}{12}+\frac{A}{2}$. By definition, we have $\delta = \pi-\gamma$. Since $\gamma \geq \frac{\pi}{12}+\frac{A}{2}$, we have $\delta \leq \frac{11 \pi}{12}-\frac{A}{2}$. Using the triangle $PQS$, we have $\beta = (\pi-\frac{\pi}{6}-\delta)=\frac{5\pi}{6}-\delta$. Since $\delta \leq \frac{11 \pi}{12}-\frac{A}{2}$, we have $\beta \geq \frac{A}{2}-\frac{\pi}{12} \geq \frac{2.54}{2}-\frac{\pi}{12}\geq 1$.

By the sine rule applied to the triangle $PQS$, we $\frac{x}{d}=\frac{\sin \delta}{\sin \beta} \leq \frac{1}{\sin\beta}$. Since $\pi/2\geq \beta \geq 1$, we have 
$\sin \beta \geq \sin 1$ and thus
$\frac{x}{d}\leq \frac{1}{\sin 1} \leq 1.2$. This proves the claim.

Let $d$ be the distance between point $p$ and the treasure. We have $d\leq L<\lambda/9$.  
The distance between $p$ and the center of $\Delta$ is at most $\lambda +d$. 
Let $M_1=L_i$ and $M_2=L_{i+1}$ be half-lines such that the center of $\Delta$ is located in the sector formed by $M_1$ and $M_2$.

First suppose that the center of $\Delta$ is located on one of the lines $M_1$ or $M_2$ at distance exactly $d+\lambda$ from $p$. Since the angle between lines $M_1$ and $M_2$ is $\pi/6$, it follows from Claim \ref{bound} (with $P$ replaced by $p$) that circle $\Delta$ intersects the other line $M_i$ at distance $x\leq 1.2 \cdot d$ from $p$. Hence, if the center of $\Delta$ is located anywhere in the sector formed by $M_1$ and $M_2$ at some distance at most $\lambda +d$ from $p$, the circle $\Delta$ intersects both  lines $M_i$ at distance $x\leq 1.2 \cdot d$ from $p$. This is due to the fact that in the situation depicted in Fig.\ref{long-range}
the length of the segment $PQ$ is a decreasing function of the length of the segment $PR$ and of the angle between segments $PR$ and $PQ$.

Hence in both cases we showed the existence of a point $q'$ in $\Delta$ and situated in one of the lines $M_i$, at distance at most $10L$ from $p$. 
By Lemma \ref{walk}, 
the agent will get to point $q'$ during the execution of the repeat loop of Algorithm $CloseTreasure$ for the smallest $j$ such that $gL \leq 2^jw $, where $g$ is some constant.
Denote this smallest $j$ by $j_0$. At the point $q'$ the agent sees the treasure. Hence the cost of Algorithm $CloseTreasure$ is at most
$24w(1+2+\cdots +2^{j_0})\leq 48w2^{j_0}$. Since by definition of $j_0$ we have $w2^{j_0-1}<gL$, the cost of Algorithm $CloseTreasure$ is at most $96gL$, which concludes the proof of the theorem in both cases.
 \end{proof}
 
{\bf Remarks.}

1. Our algorithms $FarTreasure$ and $CloseTreasure$ can be merged into a single treasure hunt algorithm preserving the features of the two algorithms in the respective cases: with no advice use algorithm $CloseTreasure$ and with a non-empty advice string use algorithm $FarTreasure$.

2. The threshold  for $L$ in comparison to $\lambda$ separating the cases that yield treasure hunt with advice $O(\log (L/ \lambda))$ and treasure hunt with no advice is somewhat arbitrary. It is easy to see that whenever $L\leq c\lambda$, for some positive (even very large) constant $c$, a treasure hunt algorithm
at cost $O(L)$ with no advice can be designed by suitably increasing the number of half-lines $L_i$ along which procedure $Walk$ is executed in Algorithm $CloseTreasure$, i.e., decreasing the angle  $\pi /6$ between consecutive half-lines. This would result in increasing the constant hidden in the $O(L)$ cost bound. Using Algorithm $FarTreasure$ for $\lambda< L \leq c\lambda$ makes this hidden constant lower at the expense of a constant number of bits of advice.

\section{Arbitrary terrains}

In this section we show that the advice complexity of treasure hunt is dramatically larger for the class of arbitrary terrains than for that of regular terrains. We show that the size of advice required for treasure hunt at cost $O(L)$ in some non-convex polygons, even without obstacles and even with all sides horizontal or vertical, is exponentially larger than that for regular terrains. In fact, we will show that this difference may be even more significant.

\begin{theorem}
For arbitrarily large integers $A$, 
there exists a class ${\cal C}(A)$ of  (non-convex) polygons $\cal P$ without obstacles, with an initial position $p$ of the agent and the location $q$ of the treasure in each of these polygons, so that the length $L$ of a shortest path between $p$ and $q$ in $\cal P$ is $\Theta(A)$, the accessibility of the treasure is $\lambda=1/2$, and the smallest size of advice required for treasure hunt at cost $O(L)$ in all polygons of this class,  is at least linear in $L$.
\end{theorem}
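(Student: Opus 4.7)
The plan is to let every polygon in $\mathcal{C}(A)$ share the same underlying shape $\mathcal{P}$ and differ only in the location of the treasure. Take $\mathcal{P}$ to be an axis-aligned realization of the complete binary tree of depth $n = \Theta(A)$: each tree node becomes a small square room and each tree edge becomes a rectangular corridor of some constant length $\ell$ and constant width $w \geq 2$, with corridor directions alternating between horizontal and vertical at successive depths. Using a shifted H-tree style layout in which the bounding boxes of successive subtrees grow geometrically with depth so that no two corridors collide, $\mathcal{P}$ can be realized as a simply connected non-convex polygon with only horizontal and vertical sides (and, being simply connected, with no obstacles). Let $p$ be the center of the root room, and for each leaf index $i \in \{0, 1, \ldots, 2^n - 1\}$ let $q_i$ be the center of the $i$-th leaf room. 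Define $\mathcal{C}(A) = \{(\mathcal{P}, p, q_i) : 0 \leq i < 2^n\}$.

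For each instance $(\mathcal{P}, p, q_i)$, the shortest path from $p$ to $q_i$ inside $\mathcal{P}$ is the unique root-to-leaf chain of corridors, of length $L = n\ell + O(w) = \Theta(A)$. Since the leaf room around $q_i$ contains an axis-aligned square of side $w \geq 2$, a closed disc of radius $1$ fits around $q_i$, so $\lambda = 1$, as required.

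For the lower bound, suppose some algorithm using advice of length $s$ solves treasure hunt at cost at most $CL$ in every $P_i = (\mathcal{P}, p, q_i)$. Because all instances share the same shape and the treasure contributes no information to the agent until it is actually seen, the agent's behavior in each $P_i$ is determined by the advice alone; each advice string $\alpha$ thus induces a single trajectory $T_\alpha$ inside $\mathcal{P}$. Let $S(\alpha) \subseteq \{0, \ldots, 2^n - 1\}$ be the set of leaf indices whose rooms are entered by the prefix of $T_\alpha$ of length $CL$. Success on $P_i$ requires $i \in S(\alpha)$ for the advice $\alpha$ assigned to $P_i$, so the (at most $2^s$) sets $S(\alpha)$ must cover $\{0, \ldots, 2^n - 1\}$.

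The key combinatorial estimate is $|S(\alpha)| \leq Cn$. Since $\mathcal{P}$ is tree-shaped, the only way to enter leaf room $i$ is to traverse the entire root-to-$i$ corridor path; hence if the prefix of $T_\alpha$ enters $k$ distinct leaf rooms it must traverse in full every corridor edge of the Steiner tree inside $\mathcal{P}$ connecting the root to those $k$ leaves. That Steiner tree contains at least $k$ edges, each of length $\ell$, so the prefix of $T_\alpha$ has length at least $k\ell$. Combined with $CL = Cn\ell + O(1)$, this gives $k = O(n)$. Summing, $2^n \leq \sum_\alpha |S(\alpha)| \leq 2^s \cdot O(n)$, so $s \geq n - O(\log n) = \Omega(L)$, contradicting any sublinear upper bound on advice. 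The main technical obstacle is the geometric embedding step: realizing the depth-$n$ complete binary tree as an axis-aligned simply connected polygon with constant-length corridors of width at least $2$ and without overlaps. This is a routine but notationally fiddly application of a shifted H-tree layout; every other step is a clean counting argument.
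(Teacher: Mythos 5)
There is a fatal geometric gap: the embedding you describe does not exist. Your argument needs $2^{n}=2^{\Theta(L)}$ leaf rooms, each containing a unit disc around its candidate treasure location (this is forced by the requirement $\lambda=1$), and each reachable from $p$ at cost $\Theta(A)=\Theta(n)$. But $2^{n}$ pairwise disjoint unit discs occupy area at least $\pi 2^{n}$, so the polygon must have diameter $\Omega(2^{n/2})$, and hence some leaf rooms lie at distance $2^{\Omega(n)}$ from $p$ --- contradicting $L=\Theta(A)$ for all members of the class. Equivalently, a complete binary tree of depth $n$ cannot be drawn with corridors of \emph{constant} length and width at least $2$ without overlaps; the standard H-tree layout avoids collisions precisely by shrinking corridor lengths (and necessarily widths) geometrically with depth, which would destroy both $\lambda=1$ and your cost accounting. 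This is not a ``routine but notationally fiddly'' step; it is an area obstruction that dooms the whole strategy. More generally, any construction that fixes the polygon and varies only a treasure of accessibility $\lambda$ within distance $O(L)$ of $p$ admits at most $O(L^{2}/\lambda^{2})$ pairwise distinguishable treasure positions, so this route can never yield an advice lower bound better than $O(\log(L/\lambda))$ --- exactly the regular-terrain bound the theorem is supposed to beat exponentially.

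The paper's proof sidesteps this by hiding the information in the \emph{shape of the boundary} rather than in the position of the treasure: it fixes the treasure in a wide open region (so $\lambda=1$) behind a comb of $k=A\cdot 2^{A-1}$ vertical corridors of exponentially small width $x=2^{-A}$, exactly one of which is open; the class ${\cal C}(A)$ consists of the $k$ polygons indexed by which corridor is open. Exponentially many corridors fit in width $A$ because boundary features, unlike unit discs, can be made arbitrarily thin. Since the agent sees only to distance $1$, it must descend depth $1/2$ into a corridor (cost $\geq 1$ per corridor probed) to learn whether it is open, so a single trajectory can certify only $O(L)$ corridors within budget $O(L)$; pigeonholing the $k$ polygons over the advice strings then gives the $\Omega(L)$ bound. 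If you want to repair your write-up, you should switch to varying the polygon (or otherwise encode exponentially many bits into thin boundary features near the $O(L)$-ball around $p$); counting treasure positions alone cannot work.
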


\begin{proof}
Fix an integer $A>8$. We construct the class ${\cal C}(A)$ of (nonconvex) polygons ${\cal P}_i$ as follows. Consider the square $S$ of side length $A$ with vertical and horizontal sides. Remove from the square $S$ two rectangles with horizontal side of length $(A-1)/2$ and vertical side of length 1: one of these removed rectangles has its upper-left corner at the upper-left corner of $S$ and the other has its upper-right corner at the upper-right corner of $S$ (see Fig. \ref{non-convex}).
In the third quarter of the height of the square $S$ (counting from the upper horizontal side) we remove vertical stripes of width $x=1/2^A$ and height $A/4-x$, and two horizontal stripes with upper side at height $A/2$ of the square,  of height $x$, and lengths respectively $2(i-1)x$, for the left horizontal stripe and $A-(2i-3)x$, for the right horizontal stripe (see Fig. \ref{non-convex}). Let $k=A/(2x)=A\cdot 2^{A-1}$. In each of the polygons ${\cal P}_i$ there are $k$ vertical corridors of width $x$, $k-1$ of them closed from above at the height $A/2+x$ (counting from the upper horizontal side of the square $S$), and one corridor open, exactly the $i$th corridor counting from the left.

\begin{figure}[tp]
\centering
\includegraphics[scale=0.75]{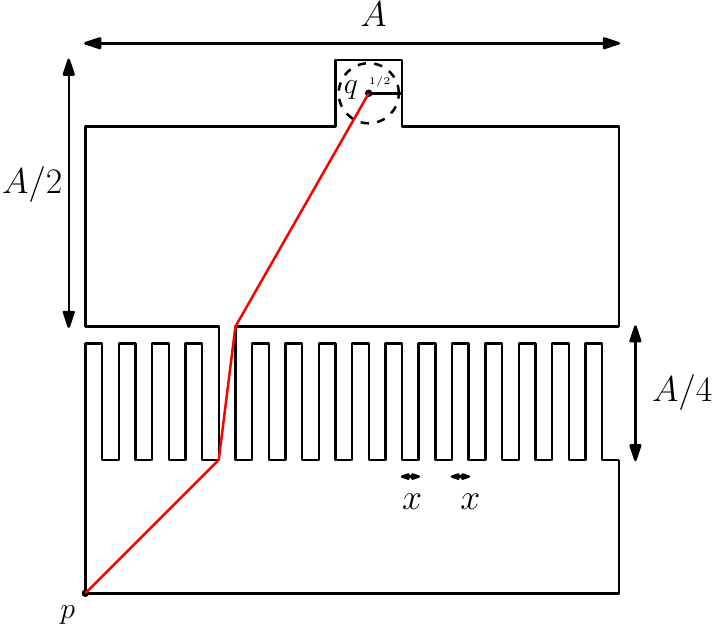}
\caption{A non-convex terrain (without obstacles) with the initial position $p$ of the agent and the location $q$ of the treasure in it. The accessibility of the treasure is $\lambda=1/2$.}
\label{non-convex}
\end{figure}

The initial position $p$ of the agent is the lower-left corner of the square $S$, and the location $q$ of the treasure is at distance 1/2 from the upper horizontal side
of $S$ and in the middle between the two vertical sides of $S$ (see Fig. \ref{non-convex}). Thus the accessibility of the treasure is $\lambda=1/2$ and the length $L$ of a shortest path between $p$ and $q$ in $\cal P$ satisfies the inequalities $A/2 <L<5A/2$, and hence $L\in \Theta(A)$. 

We prove the theorem by contradiction.
Suppose that the size of advice is at most $A/2$. Hence there are at most $2^{A/2}$ distinct pieces of advice. The number of polygons in the class  $\cal C(\cal A)$ is $k=A\cdot 2^{A-1}$. By the pigeonhole principle there are at least $y=k/2^{A/2}=A\cdot 2^{A/2-1}$ polygons in the class $\cal C(\cal A)$ to which corresponds the same advice $\alpha$.  Let ${\cal P}_{i_1},\dots , {\cal P}_{i_y}$ be these polygons. Consider the trajectory of the agent corresponding to advice $\alpha$.
This trajectory must enter each of the corridors $i_1,\dots ,i_y$, counted from the left, at the height at least $A/4 +1/2$, counting from the bottom side of the square $S$. This means that the agent must enter each of these corridors at depth at least 1/2 from the beginning of the corridor. At any lower point in a corridor, the agent cannot see if the corridor is open or closed (because it can see only at distance at most 1), and hence not going deeper in one of the corridors $j\in\{ i_1,\dots ,i_y\}$ would  preclude it from seeing the treasure, if the actual polygon is ${\cal P}_j$. Hence the trajectory of the agent corresponding to advice $\alpha$ must have a length of at least $2 \cdot (1/2) \cdot y=y$, in the case when the actual polygon is ${\cal P}_j$, where the last visited corridor has index $j$. 
However,  $y=A\cdot 2^{A/2-1}$ is not in $O(A)$ and hence not in $O(L)$, which is a contradiction. This contradiction proves that the size of advice must be larger than $A/2$, and hence it must be in $\Omega(L)$.
 \end{proof}
 
 {\bf Remark.} By replacing $x=1/2^A$ in the above proof by $1/f(A)$, for any faster growing function $f(A)$ (for example $f(A)=2^{2^A}$) we could get the lower bound
 $\Omega (\log f(L))$ on the required size of advice (instead of just $\Omega(L)$), and hence show an arbitrarily large difference between advice complexity of treasure hunt in arbitrary vs. regular terrains.

\section{Conclusion}

Using advice complexity as a measure of the difficulty of a task, we established that treasure hunt in the class of arbitrary terrains is dramatically more difficult than in the class of regular terrains.
A natural intermediate class of terrains is that of convex polygons with arbitrary convex obstacles (not necessarily $c$-fat). It remains open what is the advice complexity of treasure hunt in this class.


A problem related to treasure hunt  is that of finding a shortest path in a terrain. What is the advice complexity of this problem, i.e., what is the smallest advice that the agent needs in order to find a path of length exactly $L$ (rather than of length $O(L)$) to the target? Unfortunately, this is not a good formulation, as no finite advice could permit the agent to solve this problem, even in the empty plane.
Intuitively, the advice would have to convey the exact direction to the target, which cannot be done with a finite number of bits, as the target is a point. (This simple observation can be easily formalized).
An attempt to relax the task by requiring the (exact) shortest path not to hit the target but to ``see it'', i.e.,  get at distance 1 from it, must still fail for the same reason. It seems that a reasonable formulation of the shortest path problem in a terrain, in the context of advice complexity, has to relax the term ``shortest''. For example, the relaxation could be up to an {\em additive} constant (rather than up to a {\em multiplicative} constant, as we did, requiring cost $O(L)$). More precisely, the following problem remains open. What is the best complexity of advice sufficient to solve the treasure hunt problem in a terrain
(again, the agent has to see the treasure), at cost $L+O(1)$? Since the agent can see at distance 1, this version of treasure hunt is equivalent to solving the shortest path problem up to an additive constant.  
Similarly as in this paper, it would be interesting to find whether the difficulty of this problem (measured by advice complexity) varies for different classes of terrains.


\begin{thebibliography}{12}

\bibitem{AKM01}
S.~Abiteboul, H.~Kaplan, T.~Milo, Compact labeling schemes for ancestor
queries, Proc. 12th Annual ACM-SIAM Symposium on Discrete
Algorithms (SODA 2001), 547--556.


\bibitem{AKS02}
S. Albers and K. Kursawe and S. Schuierer,
Exploring unknown environments with obstacles,
Algorithmica 32 (2002), 123--143.

\bibitem{AG}
S.Alpern and S.Gal, The Theory of Search Games and Rendezvous, Kluwer Academic Publications, 2003.


\bibitem{BCR}
R. Baeza-Yates, J. Culberson, and J. Rawlins, Searching the plane,
Information and Computation 106 (1993), 234-252.



\bibitem{BBFY}
E. Bar-Eli, P. Berman, A. Fiat and R. Yan,
On-line navigation in a room,
Journal of Algorithms 17 (1994), 319-341.

\bibitem{BPFKRS}
P. Berman, A. Blum, A. Fiat, H. Karloff, A. Rosen and M. Saks,
Randomized robot navigation algorithms,
Proc. 7th ACM-SIAM Symp. on Discrete Algorithms (1996), 74-84.

\bibitem{BRS}
A. Blum, P. Raghavan and B. Schieber,
Navigating in unfamiliar geometric terrain,
SIAM Journal on Computing 26 (1997), 110-137.

\bibitem{BFU}
H.J. Bockenhauer, J. Fuchs, W. Unger,
The Graph Exploration Problem with Advice. CoRR abs/1804.06675 (2018).

\bibitem{CILP}
J. Czyzowicz, D. Ilcinkas, A. Labourel, A. Pelc, Worst-case optimal exploration of terrains with obstacles, Information and Computation 225 (2013), 16-28. 


\bibitem{DKP91}
X. Deng, T. Kameda and C. H. Papadimitriou,
How to learn an unknown environment,
Proc. 32nd Symp. on Foundations of Computer Science (FOCS 1991),
298--303.

\bibitem{DKP98}
X. Deng, T. Kameda and C. H. Papadimitriou,
How to learn an unknown environment I: the rectilinear case,
Journal of the ACM 45 (1998), 215-245.

\bibitem{DP}
D. Dereniowski, A. Pelc, Drawing maps with advice,  Journal of Parallel and Distributed Computing 72 (2012), 132--143. 

\bibitem{DiPe}
Y. Dieudonn\'{e}, A. Pelc, Impact of knowledge on election time in anonymous networks, Proc. 29th ACM Symposium on Parallelism in Algorithms and Architectures (SPAA 2017), 207-215.

\bibitem{DKM}
S. Dobrev, R.Kralovic, E. Markou. Online graph exploration with
advice. Proc. 19th International Colloquium on Structural Information and Communication Com-
plexity (SIROCCO 2012), 267-278. 

\bibitem{EFKR}
Y. Emek, P. Fraigniaud, A. Korman, A. Rosen, Online computation with advice, Theoretical Computer Science 412 (2011), 2642--2656.

\bibitem{FGIP}
P. Fraigniaud, C. Gavoille, D. Ilcinkas, A. Pelc, 
Distributed computing with advice: Information sensitivity of graph coloring, 
Distributed Computing 21 (2009), 395--403.

\bibitem{FIP1}
P. Fraigniaud, D. Ilcinkas, A. Pelc, 
Communication algorithms with advice, Journal of  Computer and System Sciences 76 (2010), 222--232.

\bibitem{FIP2}
P. Fraigniaud, D. Ilcinkas, A. Pelc, 
Tree exploration with advice, Information and Computation 206 (2008), 1276--1287.

\bibitem{FKL}
P. Fraigniaud, A. Korman, E. Lebhar,
Local MST computation with short advice,
Theory of Computing Systems 47 (2010), 920--933.





\bibitem{FHGTM}
G. M. Fricke, J. P. Hecker, A. D. Griego, L. T. Tran and Melanie E. Moses, A Distributed Deterministic Spiral Search Algorithm for Swarms, Proc. IEEE/RSJ International Conference on Intelligent Robots and Systems (IROS 2016), 4430-4436.

\bibitem{FP}
E. Fusco, A. Pelc, Trade-offs between the size of advice and broadcasting time in trees, Algorithmica 60 (2011), 719--734. 


\bibitem{FPR}
E. Fusco, A. Pelc, R. Petreschi,  Topology recognition with advice, Information and Computation 247 (2016), 254-265. 

\bibitem{GB01}
Y. Gabriely, Y. Rimon,
Spanning-tree based coverage of continuous areas by a mobile robot,
Proc. Int. Conf. of Robotics and Automaton (ICRA 2001), 1927-1933.


\bibitem{GPPR02}
C.~Gavoille, D.~Peleg, S.~P\'{e}rennes, R.~Raz.
Distance labeling in graphs, 
Journal of Algorithms 53 (2004), 85-112.





\bibitem{GBBS08}
S.K. Ghosh, J.W. Burdick, A. Bhattacharya and S. Sarkar,
Online algorithms with discrete visibility - exploring unknown polygonal environments,
Robotics \& Automation Magazine 15 (2008), 67-76.

\bibitem{GMP}
C. Glacet, A. Miller, A. Pelc, Time vs. information tradeoffs for leader election in anonymous trees,  ACM Transactions on Algorithms 13 (2017), 31:1-31:41.

\bibitem{GP}
B. Gorain, A. Pelc, Deterministic graph exploration with advice, Proc. 44th International Colloquium on Automata, Languages and Programming (ICALP 2017), 132:1-132:14. 




\bibitem{HIKK01}
F. Hoffmann. C. Icking, R. Klein and K. Kriegel,
The polygon exploration problem,
SIAM J. Comput. 31 (2001), 577--600.

\bibitem{IKP}
D. Ilcinkas, D. Kowalski, A. Pelc, 
Fast radio broadcasting with advice, 
 Theoretical Computer Science, 411 (2012),  1544--1557.


\bibitem{JL}
A. Jez and J. Lopuszanski, On the two-dimensional cow search problem, Information Processing Letters 109 (2009), 543 - 547.

\bibitem{KKKP02}
M.~Katz, N.~Katz, A.~Korman, D.~Peleg, Labeling schemes for flow and
connectivity, 
SIAM Journal of  Computing 34 (2004), 23--40.

\bibitem{KKKS}
D. Komm, R. Kralovic, R. Kralovic, J. Smula,
Treasure hunt with advice, Proc. 22nd International Colloquium on Structural Information and Communication Com-
plexity  (SIROCCO 2015), 328-341.


\bibitem{KKP05}
A. Korman, S. Kutten, D. Peleg, Proof labeling schemes,
Distributed Computing 22 (2010), 215--233.  


\bibitem{La}
E. Langetepe, On the Optimality of Spiral Search, Proc. 21st Ann. ACM-SIAM Symp. Disc. Algor. (SODA 2010), 1-12.

\bibitem{La2}
E. Langetepe, Searching for an axis-parallel shoreline, Theoretical Computer Science 447 (2012), 85-99.

\bibitem{MP}
A. Miller, A. Pelc, Tradeoffs between cost and information for rendezvous and treasure hunt, Journal of Parallel and Distributed Computing 83 (2015), 159-167. 

\bibitem{SN}
N. Nisse, D. Soguet, Graph searching with advice,
Theoretical Computer Science 410 (2009), 1307--1318.




\bibitem{N92}
S. Ntafos, Watchman routes under limited visibility,
Comput. Geom. Theory Appl. 1 (1992), 149--170.

\bibitem{PY}
C. H. Papadimitriou, M. Yannakakis,
Shortest paths without a map,
Theor. Comput. Sci. 84 (1991), 127--150.

\bibitem{Pe}
A. Pelc, Reaching a target in the plane with no information, Information Processing Letters 140 (2018), 13-17. 


\bibitem{SF}
K. Spieser and E. Frazzoli, The  Cow-Path  Game:  A  Competitive  Vehicle  Routing  Problem,  Proc. 51st IEEE Conference on Decision and Control (2012), 6513 - 6520.

\bibitem{TSZ}
A. Ta-Shma and U. Zwick,
Deterministic rendezvous, treasure hunts and strongly universal exploration sequences.
ACM Transactions on Algorithms 10 (2014), 12:1-12:15.

\bibitem{TZ05}
M.~Thorup, U.~Zwick, Approximate distance oracles,
Journal of the ACM, 52 (2005), 1--24.






\end{thebibliography}
\end{document}